\newacro{SD}[SD]{strictly dissipative}
\newacro{ASD}[ASD]{almost strictly dissipative}
\newacro{SLO}[SL-operator]{Sturm-Liouville operator}
\newacro{DPS}[DPS]{distributed parameter systems}
\newacro{BCS}[BCS]{boundary control system}
\newacro{DBCS}[DBCS]{disturbed boundary control system}
\newacro{LTI}[LTI]{linear time-invariant}
\newacro{IC-POVM}[IC-POVM]{informationally-complete \ac{POVM}}
\newacro{SIC-POVM}[SIC-POVM]{symmetric informationally-complete \ac{POVM}}
\newacro{POVM}[POVM]{positive operator-valued measure}
\newacro{QIS}[QIS]{quantum information science}
\newacro{QSE}[QSE]{quantum state estimation}
\newacro{QST}[QST]{quantum state tomography}
\newacro{QSO}[QSO]{quantum state observer}
\newacro{QP}[QP]{quadratic program}
\newacro{KKT}[KKT]{Karush-Kuhn-Tucker}
\newacro{LASSO}[LASSO]{least absolute shrinkage and selection operator}
\begin{document}

\preprint{}

\title{A Note on the Estimation of Von Neumann and Relative Entropy\\
via Quantum State Observers}
\thanks{Corresponding author: Mark Balas, mbalas@tamu.edu}%

\author{Mark Balas$^1$, Vinod P. Gehlot$^2$, Tristan D. Griffith$^1$}%
\affiliation{%
 $^1$Department~of~Mechanical~Engineering,~Texas A\&M University,~College~Station,~TX,~USA\\
 $^2$Jet~Propulsion~Laboratory,~California~Institute~of~Technology,~Pasadena,~CA,~USA
}%

\date{January 5, 2023}

\begin{abstract}
An essential quantity in quantum information theory is the von Neumann entropy which depends entirely on the quantum density operator. Once known, the density operator reveals the statistics of observables in a quantum process, and the corresponding von Neumann Entropy yields the full information content. However, the state, or density operator, of a given system may be unknown. Quantum state observers have been proposed to infer the unknown state of a quantum system. In this note, we show (i) that the von Neumann entropy of the state estimate produced by our quantum state observer is exponentially convergent to that of the system's true state, and (ii) the relative entropy between the system and observer's state converges exponentially to zero as long as the system starts in a full-rank state. 
\end{abstract}

\maketitle


\section{\label{sec:Intro}Introduction:\protect\\}
Quantum probability theory~\cite{paul_dirac, von_neumannn, gudder} predicts the behavior of many physical systems with unprecedented accuracy.
%
When modelling such systems, it is important to be able to characterize their statistical properties.
The fundamental element of quantum statistical mechanics is the density operator. Once determined, this operator reveals the statistics of quantum observables in a dynamical quantum system. This note concentrates on the inference of the density operator for a generic closed quantum system described by a Liouville--von Neumann master equation. Prior work on linear observers for quantum systems was done in~\cite{BalGeh:22}. Developments in quantum measurement and the quantum Kalman filter appear in \cite{jacobs,bouten} with theoretical issues and foundations in \cite{accardi}. Linear observability of quantum systems was studied in~\cite{GriBal:22}. Clou\^{a}tr\'{e} and Balas et al. give a full characterization of the linear observability of closed quantum systems with \ac{POVM} output in~\cite{clouatre2022linear} and provide a canonical quantum state observer. Since the original work of~\cite{BalGeh:22}, the Hilbert metric projection of a square matrix onto the set of quantum density operators has been studied. In particular, it can be paired with a linear quantum state observer to produce an exponentially convergent nonlinear observer which ensures the observer's state is a valid quantum density operator. Recently~\cite{clouatre2023closedform} derived a closed-form solution of the metric projection making said nonlinear observers realizable. Now that these types of observers have been unlocked, questions regarding quantum information theoretic properties of the quantum state observer should be studied. Specifically, what quantities of information can be efficiently estimated for a quantum system in an unknown state? Answering this question could lead to additional practical uses of quantum state observers in quantum networks and quantum metrology which rely heavily on information-theoretic ideas~\cite{BelHir:13, TotGez:14}. There are numerous quantum information-theoretic quantities, \eg~the von Neumann entropy, quantum relative entropy, and quantum Fisher information.

In this note, we study the von Neumann entropy in the context of quantum state observers. In particular, we show that the von Neumann entropy of the observer's state always converges exponentially to the entropy of the system's state. We also study the quantum relative entropy. While there are cases where one may not expect the quantum relative entropy to converge, we show under reasonable assumptions (namely that the system starts in a full-rank state) that the quantum relative entropy between the observer's state and the system's state converges exponentially to zero; this means that the two states become indistinguishable. Proof of these facts rely on logarithmic inequalities like those developed by Fannes~\cite{Fannes1973}.

\section{Quantum state observers}\label{sec:LQSO}
Linear quantum state observers are used to estimate the state of a quantum dynamical system of the form
\begin{equation}\label{lindblad}
    \dot{\dens} = -\i\, [\Ham,\, \dens]
\end{equation}
where $\dens\in\H$ is the state of the system, $d$ is the dimension of the system, $\Ham\in\H$ is the system Hamiltonian, $[\M{A},\, \M{B}]$ is the commutator of $\M{A}$ and $\M{B}$, and  $\i\triangleq\sqrt{-1}$. The state $\dens$ belongs to the set $\S = \{\altdens\in\H\, :\, \trace{\altdens}=1,\, \altdens=\altdens^\dagger,\, \altdens\succeq\M{0}\}$ of valid density operators. If the initial (unknown) state of the system is $\dens_0\triangleq\dens(0)$, the goal of a quantum state observer is to produce a estimate $\hdens(t)$ which converges to the true state over time:
\begin{equation}\label{convergentError}
    \lim_{t\rightarrow\infty} \|\hdens(t) - \dens(t)\| = 0
\end{equation}
where $\|\cdot\|$ is the Hilbert-Schmidt norm on $\H$. This estimate is generated using a set of measurement statistics that are obtained over numerous experiments. The measurement statistics are summarized via a vector $\V{y}\in\R^{\numMeas}$ given by
\begin{equation}\label{out}
    \V{y}(t) = 
    \begin{bmatrix}
        \trace{\Meas_1\, \dens(t)}\\
        \trace{\Meas_2\, \dens(t)}\\
        \vdots\\
        \trace{\Meas_{\numMeas}\, \dens(t)}
    \end{bmatrix}
\end{equation}
where $\numMeas$ is the number of measurement statistics and $\{\Meas_{\measInd}\}_{\measInd=1}^{\numMeas} \subset \H$ are observables.

The first question to ask is whether the observer's task is feasible. This is certainly the case if the system is \textit{linearly observable}~\cite[sec. 14]{brockett}. If the observables $\{\Meas_{\measInd}\}_{\measInd=1}^{\numMeas}$ form a \ac{POVM}, then the linear observability problem reduces to that of Clou\^{a}tr\'{e} et al.~\cite{clouatre2022linear} and the observability can be easily tested using the dynamics in the form of~\eqref{lindblad}.

\begin{rem}
    As remarked in~\cite{clouatre2022linear}, without additional considerations for measurement back-action, the output $\V{y}(t)$ may not be obtained over any continuous interval due to the nature of obtaining measurement statistics from a quantum system. However, the observer may be discretized (sampled~\cite[pg. 116]{antsaklis}) to obtain an exponentially convergent discrete time observer. Developing theory in continuous time allows one to determine the limits of the theory. 
\end{rem}

Denote the vectorization of the density operator as $\V{x}\triangleq\Vec{\dens}\in\C^{d^2}$, which is a natural isometry between the Hilbert spaces $\C^{d^2}$ and $\H$ equipped with their canonical inner products. The dynamics of the vectorized system can now be efficiently written as
\begin{equation}\label{linearSys}
    \begin{cases}
      \dot{\V{x}}(t) = \M{A}\, \V{x}(t)\\
      \V{y}(t) = \M{C}\,\V{x}(t)
    \end{cases}
\end{equation}
where $\M{A}$ is given by 
\begin{equation}
    \M{A} \triangleq -\i\, (\eye\otimes \Ham - \Ham^T\otimes\eye)
\end{equation}
and $\M{C}$ is given by
\begin{equation}
    \M{C} \triangleq
    \begin{bmatrix}
        \Vec{\Meas_1}^\dagger\\
        \Vec{\Meas_2}^\dagger\\
        \vdots\\
        \Vec{\Meas_{\numMeas}}^\dagger
    \end{bmatrix}\in \C^{\numMeas\times d^2}.
\end{equation}
The system is linearly observable if and only if the observability matrix
\[
    \M{O}(\M{A},\,\M{C}) \triangleq
    \begin{bmatrix}
        \M{C}\\
        \M{C}\M{A}\\
        \M{C}\M{A}^2\\
        \vdots\\
        \M{C}\M{A}^{d^2-1}
    \end{bmatrix} \in\C^{\numMeas d^2\times d^2}
\]
is of rank $d^2$ (i.e., $\M{O}(\M{A},\M{C})$ has a trivial nullspace)~\cite[pg. 221]{antsaklis}. An equivalent definition of observability is the ability to choose $\M{K}\in\C^{d^2 \times \numMeas}$ such that the matrix $(\M{A}-\M{K}\M{C})$ has eigenvalues in any desired location~\cite{luenberger}. The matrix $\M{K}$ is called the observer gain. A linear quantum state observer dynamically updates its estimate $\tilde{\V{x}}\in\C^{d^2}$ of the true state $\V{x}$ as follows:
\begin{equation}\label{vectorizedObs}
    \begin{cases}
      \dot{\tilde{\V{x}}}(t) = \M{A}\, \tilde{\V{x}}(t) + \M{K}(\V{y}(t) - \tilde{\V{y}}(t))\\
      \tilde{\V{y}}(t) = \M{C}\, \tilde{\V{x}}(t)
    \end{cases}.
\end{equation}
Define the observer error to be $\V{e}(t) \triangleq \tilde{\V{x}}(t) - \V{x}(t)$. Differentiating this expression using~\eqref{linearSys} and~\eqref{vectorizedObs} reveals the error dynamics
\begin{equation}
    \dot{\V{e}}(t) = (\M{A}-\M{K}\M{C})\, \V{e}(t).
\end{equation}
If one chooses $\M{K}$ such that all of the eigenvalues of $(\M{A}-\M{K}\M{C})$ are in the left-hand side of the complex plane with real components upper bounded by $-\lambda<0$, then the observer error tends towards zero exponentially, i.e. there exists an $M\in\R$ such that 
\begin{equation}
    \|\V{e}(t)\|_2 \leq M\, e^{-\lambda t}, \;\;\; \forall t\geq0.
\end{equation}
Letting $\hdens(t) = \Vec{\tilde{\V{x}}(t)}^{-1}$ and recalling that the vectorization operator is an isometry between $\C^{d^2}$ and $\C^{d\times d}$, one can conclude
\begin{equation}
    \|\hdens(t) - \dens(t)\|  = \|\V{e}(t)\|_2 \leq M\, e^{-\lambda t}
\end{equation}
for all $t\geq0$, which achieves the goal of~\eqref{convergentError}.

Linear observability cannot be achieved with an arbitrarily small number of observables. In fact, at least $d$ observables are required. The following result extends that of~\cite{clouatre2022linear} to the case where the measurement statistics do not come from a~\ac{POVM}.

\begin{thm}\label{thm:observability}
    If the closed quantum system $(\M{A},\, \M{C})$ is linearly observable then $\numMeas\geq d$.
\end{thm}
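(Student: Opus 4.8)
The plan is to exhibit a subspace of the state space that the dynamics cannot ``stir'' — the kernel of $\M{A}$ — and to force $\M{C}$ to be injective on it by observability. First I would record the elementary identity
$\M{A}\,\Vec{\M{X}} = -\i\,\Vec{[\Ham,\M{X}]}$ for every $\M{X}\in\C^{d\times d}$, which follows at once from the definition $\M{A} = -\i(\eye\otimes\Ham-\Ham^T\otimes\eye)$ together with the standard vectorization rule $\Vec{\M{P}\M{Q}\M{R}} = (\M{R}^T\otimes\M{P})\Vec{\M{Q}}$ (this is exactly the computation that turns $\dot{\dens}=-\i[\Ham,\dens]$ into $\dot{\V{x}}=\M{A}\V{x}$). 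Hence $\ker\M{A}$ is precisely the vectorization of the commutant $\{\M{X}\in\C^{d\times d}\,:\,[\Ham,\M{X}]=\M{0}\}$ of the Hamiltonian.

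The second step is a dimension bound: $\dim_\C\ker\M{A}\geq d$. Since $\Ham$ is Hermitian we may diagonalize $\Ham=\M{U}\M{D}\M{U}^\dagger$ with $\M{U}$ unitary and $\M{D}$ diagonal; the $d$ matrices $\M{U}\M{E}_{jj}\M{U}^\dagger$, with $\M{E}_{jj}$ the standard diagonal units, are linearly independent and each commutes with $\Ham$, so the commutant — and therefore $\ker\M{A}$ — has complex dimension at least $d$. (Equivalently, one may invoke the classical fact that the centralizer of any $d\times d$ matrix has dimension at least $d$.)

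Finally I would run the observability contradiction. Suppose, for contradiction, that $\numMeas<d$. Then $\M{C}$ restricted to $\ker\M{A}$ is a linear map from a space of complex dimension at least $d$ into $\C^{\numMeas}$, so by rank--nullity its kernel is nontrivial: there is $\V{v}\neq\M{0}$ with $\M{A}\V{v}=\M{0}$ and $\M{C}\V{v}=\M{0}$. Because $\M{A}\V{v}=\M{0}$, it follows that $\M{C}\M{A}^{k}\V{v}=\M{0}$ for every $k\geq0$, hence $\M{O}(\M{A},\M{C})\V{v}=\M{0}$ with $\V{v}\neq\M{0}$. This contradicts linear observability, which (as quoted in the text) is equivalent to $\M{O}(\M{A},\M{C})$ having a trivial nullspace. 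Therefore $\numMeas\geq d$.

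I expect the only genuine content to be the two structural facts in the first two paragraphs — the identification of $\ker\M{A}$ with the commutant of $\Ham$ and the lower bound $d$ on its dimension; everything afterward is a routine rank--nullity argument against the nullspace characterization of observability, and it uses nothing about whether the $\Meas_\measInd$ form a \ac{POVM}. The one point to keep straight is that $\M{C}\in\C^{\numMeas\times d^2}$ is complex while the physical output is real; since the observability matrix and the rank-$d^2$ criterion are both posed over $\C$ in the excerpt, the dimension count is carried out over $\C$ and no complication arises.
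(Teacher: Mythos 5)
Your proof is correct and follows essentially the same route as the paper's: both exploit that $\ker\M{A}$ has dimension at least $d$ and then use a dimension count to produce a nonzero vector annihilated by $\M{C}$ and every $\M{C}\M{A}^k$, contradicting the trivial-nullspace characterization of observability. The only difference is that you actually justify the bound $\dim\ker\M{A}\geq d$ (via the commutant of $\Ham$), which the paper simply asserts.
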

\begin{proof}
    The nullspace of $\M{A}$ has dimension at least $d$. Let $\V{v}_1,\V{v}_2,\dots,\V{v}_d$ be linearly independent vectors from this space, and define 
    \[
        \V{v}_j' \triangleq \M{C}\, \V{v}_j \in \C^{\numMeas}, \;\;\; j=1,2,\dots, d.
    \]
    If $\numMeas< d$, then the vectors $\{\V{v}_j'\}_{j=1}^d$ are linearly dependent and there exists a set of non-zero coeffecients $\{\beta_1,\beta_2,\dots,\beta_d\}$ such that
    \[
        \sum_{j=1}^d \beta_j\, \V{v}_j' = 0.
    \]
    Therefore, $\V{v}\triangleq \sum_{j=1}^d \beta_j\, \V{v}_j\neq \V{0}$ is such that $\M{C}\V{v}=\V{0}$ and $\M{C}\M{A}^k\V{v}=\V{0}$ for $k=1,2,\dots, d^2-1$. That is the nullspace of $\M{O}(\M{A},\, \M{C})$ contains a non-zero vector. The rank of the observability matrix is then strictly less than $d^2$ and the system is not linearly observable.
\end{proof}

One should note that $\hdens(t)$ is not necessarily a valid quantum density operator despite converging exponentially fast to an element of the set $\S$. One can rectify this by letting $\hhdens(t) \triangleq \proj_{\S}(\hdens(t))$ be the projection of $\hdens(t)$ onto the set $\S$ of valid density operators. Recently, a closed-form solution for this projection has been presented in the literature~\cite{clouatre2023closedform}. Because $\proj_{\S}$ is the Hilbert metric projection onto a closed convex set, it is non-expansive, i.e.  the projection cannot increase the estimation error. Therefore, $\hhdens(t)$ converges exponentially to the true state $\dens(t)$ of the system while also being a valid density:
\begin{equation}
    \|\hhdens(t) - \dens(t)\| \leq \|\hdens(t) - \dens(t)\| \leq M\, e^{-\lambda t}, \;\;\; \forall t\geq0.
\end{equation}

\begin{rem}
    The projection $\proj_{\S}$ is nonlinear since it does not obey the principle of superposition. For instance, for two matrices $\M{A}$ and $\M{B}$ it is never the case that $\proj_{\S}(\M{A}+\M{B})= \proj_{\S}(\M{A}) + \proj_{\S}(\M{B})$ since the left side of the equation has trace one and the right side has trace two. Hence, the estimate $\hhdens(t)$ is actually an exponentially convergent \textit{nonlinear} observer despite being based on linear observer theory.
\end{rem}

%

\section{Estimation of von Neumann \& relative entropy\label{sec:Entropy}}
Claude Shannon introduced the idea of the \emph{entropy}~\cite{Shan:48} of a random variable: given a random variable $\V{x}$ with (finite) probability distribution
\begin{equation}
    (P_1, P_2, \ldots, P_n) = (p(x_1),p(x_2),\ldots, p(x_n))
\end{equation}
the \emph{Shannon entropy} is
\begin{equation}
    Q(\V{x})=Q(P_1,P_2, \ldots, P_n) \triangleq -\sum_{i=1}^{n} p_i \log p_i.
\end{equation}
This entropy can be seen as both the average information gain when we learn the value of $x$, and, the average amount of uncertainty before we learn the value of $x$. The definition $0\log{0}\triangleq0$ is used for the value of the entropy at the origin. Later in \cite{von_neumannn}, von Neumann introduced the \emph{von Neumann quantum entropy}
\begin{align}
S(\dens)\triangleq - \tr{\dens \log \dens}
\end{align}
of the density $\dens\in\S$. In this case, $S$ measures the ``uncertainty'' of a quantum density operator. A pure state has entropy zero; however, the entropy of a mixed state is non-zero since it represents an ensemble of systems in various states.

Because $\dens$ is Hermitian, it can be diagonalized by a unitary $\M{Q}$ such that $\dens = \M{Q} \M{\Lambda} \M{Q}^\dagger$. Using this fact and standard properties of the trace and matrix logarithm, one can show
\begin{equation}
    S(\dens) = S(\M{\Lambda}) = -\sum_{k=1}^{d} \lambda_k \log \lambda_k
\end{equation}
which is also the Shannon entropy
\begin{equation}
    Q(\V{\lambda})\equiv -\sum_{k=1}^{d} \lambda_k \log \lambda_k
\end{equation}
given $\V{\lambda} = \diag\{\M{\Lambda}\}$. It is known, c.f.  \cite{nielsen}, that $S(\dens)$ is nonnegative and bounded:
\begin{align}
    0 \leq S(\dens) \leq \log d.
\end{align}
\begin{defn}
    A bounded function $f:\R_+\rightarrow\H$ is said to be \textit{essentially exponentially convergent} if there exists a finite time $T\in(0,\infty)$ and positive constants $M$ and $\sigma$ such that
    \[
        \|f(t)\| \leq Me^{-\sigma t}, \;\;\; \forall t\geq T.
    \]
\end{defn}

Using the notation of the previous section, we will prove the first main result of this note. This result shows that the nonlinear observer developed from control theory in Section~\ref{sec:LQSO} will also produce an essentially exponentially convergent estimate of the von Neumann entropy of $\dens$. The proof of Theorem~\ref{thm:thmVI1} will follow the presentation of two lemmas necessary for the proof.
\begin{thm}\label{thm:thmVI1}
    Let $\hhdens(t)$ be the density estimate produced by the quantum state observer described in the prior section. If the observer gain $\M{K}$ has been designed such that the observer error is bounded by the exponential rate $Me^{-\sigma t}$, then the von Neumann entropy of the estimated state $\hhdens(t)$ is essentially exponentially convergent to that of the true state $\dens$.
\end{thm}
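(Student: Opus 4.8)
The plan is to bound the scalar function $f(t)\triangleq S(\hhdens(t))-S(\dens)$ by a continuity estimate for the von Neumann entropy and then feed in the exponential decay of the observer error. I would begin by observing that the true entropy is constant in time: the Liouville--von Neumann flow $\dens(t)=e^{-\i\Ham t}\,\dens_0\,e^{\i\Ham t}$ is unitary and $S$ is invariant under unitary conjugation, so $S(\dens(t))=S(\dens_0)=S(\dens)$ for every $t\ge 0$. It therefore suffices to control $|S(\hhdens(t))-S(\dens(t))|$, where both arguments are bona fide density operators --- recall $\hhdens(t)=\proj_{\S}(\hdens(t))\in\S$ --- and where the bound $\|\hhdens(t)-\dens(t)\|\le Me^{-\sigma t}$ established in Section~\ref{sec:LQSO} is available.

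The next step is to pass from the Hilbert--Schmidt norm to the trace norm $\|\cdot\|_1$ using $\|X\|_1\le\sqrt{d}\,\|X\|$ for $X\in\H$, so that $\varepsilon(t)\triangleq\|\hhdens(t)-\dens(t)\|_1\le\sqrt{d}\,M e^{-\sigma t}$; I would record as a lemma a Fannes-type continuity inequality, which for $t$ beyond some finite $T_0$ (so that $\varepsilon(t)$ is small) gives
\begin{equation}
  |S(\hhdens(t))-S(\dens(t))|\ \le\ \varepsilon(t)\,\log d\ +\ \eta\!\big(\varepsilon(t)\big),
\end{equation}
with $\eta(x)\triangleq-x\log x$ and $\eta(0)\triangleq0$ --- the same function that appears in the entropy itself. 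The first term is clearly $O(e^{-\sigma t})$. For the second, $\eta$ is increasing on $(0,1/e)$, so choosing $T_0$ with $\sqrt{d}\,Me^{-\sigma T_0}<1/e$ yields $\eta(\varepsilon(t))\le\eta(\sqrt{d}\,M e^{-\sigma t})=\sqrt{d}\,M e^{-\sigma t}\big(\sigma t-\log(\sqrt{d}\,M)\big)=O\!\big(t\,e^{-\sigma t}\big)$. Hence $|f(t)|\le C_1 e^{-\sigma t}+C_2\,t\,e^{-\sigma t}$ for all $t\ge T_0$; since $(1+t)e^{-\sigma t}$ decays strictly faster than $e^{-\sigma' t}$ for any fixed $\sigma'\in(0,\sigma)$, there are a finite $T\ge T_0$ and a constant $M'>0$ with $|f(t)|\le M' e^{-\sigma' t}$ for all $t\ge T$, which is exactly essential exponential convergence of $S(\hhdens(\cdot))$ to $S(\dens)$. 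A short second lemma can carry this last implication.

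The main obstacle --- and the reason the statement promises only \emph{essential} exponential convergence rather than plain exponential convergence --- is precisely the correction term $\eta(\varepsilon(t))$: because $\eta(x)=-x\log x$ fails to be Lipschitz at the origin, substituting $\varepsilon(t)\sim e^{-\sigma t}$ produces the spurious polynomial factor $t$, which forces one to concede an arbitrarily small slice of the decay rate $\sigma$ and a finite start-up time $T$. If one wishes to avoid invoking Fannes' inequality as a black box, the same $O(t\,e^{-\sigma t})$ estimate can be obtained directly at the spectral level: by Weyl's perturbation bound (or Hoffman--Wielandt) the sorted eigenvalues of $\hhdens(t)$ and $\dens(t)$ differ by $O(e^{-\sigma t})$, and the concavity and monotonicity of $\eta$ turn $|S(\hhdens(t))-S(\dens(t))|\le\sum_{k=1}^{d}|\eta(\hat\lambda_k)-\eta(\lambda_k)|$ into the desired bound --- this is the logarithmic-inequality argument of~\cite{Fannes1973} mentioned in the introduction.
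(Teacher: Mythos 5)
Your proposal is correct and follows essentially the same route as the paper's proof: convert the Hilbert--Schmidt bound to a trace-norm bound via $\|\cdot\|_1\le\sqrt{d}\,\|\cdot\|$, wait until the error drops below $1/e$ so that Fannes' inequality (Lemma~\ref{lem:Fannes}) applies, use the monotonicity of $-x\log x$ on $(0,\tfrac{1}{e})$ to substitute the exponential envelope, and then absorb the resulting $t\,e^{-\sigma t}$ term into a decaying exponential. The only cosmetic difference is that the paper performs this last absorption with the explicit bound of Lemma~\ref{lem:myLemma}, giving the concrete rate $(1-\tfrac{1}{e})\sigma$ valid for all $t\ge 0$, whereas you invoke the generic fact that $(1+t)e^{-\sigma t}$ is eventually dominated by $e^{-\sigma' t}$ for any $\sigma'<\sigma$; your preliminary observation that $S(\dens(t))$ is constant along the unitary flow is a nice clarification that the paper leaves implicit.
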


\begin{lem}[Fannes' Inequality,~\cite{NieChu:00,Fannes1973}]\label{lem:Fannes}
    If $\dens,\altdens\in\S$ satisfy $\epsilon \triangleq \|\dens-\altdens\|_1 \leq \tfrac{1}{e}$, where $\|\cdot\|_1$ is the trace norm, then
    \begin{equation}\label{FannesIneq}
        |S(\dens) - S(\altdens)| \leq \epsilon\log d - \epsilon \log \epsilon.
    \end{equation}
\end{lem}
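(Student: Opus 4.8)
The plan is to reduce this operator inequality to a purely classical statement about the two eigenvalue spectra, and then bound the entropy gap term-by-term with a concavity argument on the function $\eta(x)\triangleq -x\log x$.

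First I would exploit the fact, already recorded in this section, that $S$ is a \emph{spectral} function: writing the eigenvalues of $\dens$ and $\altdens$ in decreasing order as $r_1\ge\cdots\ge r_d$ and $s_1\ge\cdots\ge s_d$, one has $S(\dens)=\sum_k \eta(r_k)$ and $S(\altdens)=\sum_k \eta(s_k)$. Then I would invoke the Lidskii--Mirsky--Wielandt eigenvalue perturbation inequality, which for Hermitian operators and the trace norm gives $\sum_{k=1}^{d}|r_k-s_k|\le \|\dens-\altdens\|_1=\epsilon$. Setting $\delta\triangleq\sum_k|r_k-s_k|\le\epsilon\le\tfrac1e$ and $\nu_k\triangleq|r_k-s_k|$, I have now replaced the two operators by two real spectra that are close in $\ell^1$.

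The crucial analytic step is the pointwise estimate
\[
    |\eta(x)-\eta(y)|\le \eta(|x-y|),\qquad x,y\in[0,1],\ |x-y|\le\tfrac1e .
\]
I would prove this by fixing $t=|x-y|$, writing $x=y+t$, and studying $g(y)\triangleq \eta(y+t)-\eta(y)$ on $y\in[0,1-t]$. Since $\eta''<0$, the derivative $\eta'$ is decreasing, so $g'<0$ and $g$ is monotone; its extreme values are $g(0)=\eta(t)$ and $g(1-t)=-\eta(1-t)$. It then remains only to verify $\eta(1-t)\le\eta(t)$ for $t\le\tfrac1e$, which follows from a short sign analysis of $h(t)\triangleq\eta(t)-\eta(1-t)$ showing $h\ge 0$ on $[0,\tfrac12]$. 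This delivers $|g(y)|\le\eta(t)$, and since every $\nu_k\le\delta\le\tfrac1e$, the estimate applies to each pair $(r_k,s_k)$.

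Finally I would assemble the pieces. By the triangle inequality and the pointwise bound,
\[
    |S(\dens)-S(\altdens)|\le \sum_{k=1}^{d}|\eta(r_k)-\eta(s_k)|\le \sum_{k=1}^{d}\eta(\nu_k),
\]
and by concavity of $\eta$ (Jensen over the $d$ terms) $\sum_k\eta(\nu_k)\le d\,\eta(\delta/d)=\eta(\delta)+\delta\log d$. Because $\eta$ is increasing on $[0,\tfrac1e]$ with $0\le\delta\le\epsilon\le\tfrac1e$ and $\log d\ge 0$, I may replace $\delta$ by $\epsilon$ to conclude $|S(\dens)-S(\altdens)|\le \eta(\epsilon)+\epsilon\log d=\epsilon\log d-\epsilon\log\epsilon$. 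The main obstacle is the pointwise inequality of the third paragraph, specifically its two-sided form, which rests on the non-obvious comparison $\eta(1-t)\le\eta(t)$ for $t\le\tfrac1e$; this is precisely where the hypothesis $\epsilon\le\tfrac1e$ is consumed, both to keep $\eta$ monotone and to validate the endpoint estimate. The eigenvalue perturbation bound enters only as a black-box citation, and the concavity step is routine.
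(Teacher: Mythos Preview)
The paper does not supply its own proof of this lemma; it is quoted as a known result with citations to Fannes and to Nielsen--Chuang, so there is nothing in the text to compare your argument against. Your proposal is correct and is in fact the standard route taken in those references: reduce to spectra, control $\sum_k|r_k-s_k|$ by $\|\dens-\altdens\|_1$ via the Mirsky/Lidskii eigenvalue perturbation bound, use the pointwise estimate $|\eta(x)-\eta(y)|\le\eta(|x-y|)$, and finish with Jensen and monotonicity of $\eta$ on $[0,\tfrac1e]$. One minor remark: the pointwise bound actually holds for all $|x-y|\le\tfrac12$, not merely $\le\tfrac1e$; the constraint $\epsilon\le\tfrac1e$ is only needed in the very last step to push $\delta$ up to $\epsilon$ using monotonicity of $\eta$.
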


\begin{lem}\label{lem:myLemma}
    Let $\epsilon>0$ and $a\triangleq 1- \tfrac{1}{e}$. For all $t\geq0$, the following inequality holds
    \begin{equation}\label{myIneq}
        t e^{-\epsilon t} \leq \tfrac{1}{\epsilon}\, e^{- a \epsilon t}.
    \end{equation}
    This inequality is tight with equality holding when $t=\tfrac{1}{\epsilon(1-a)}$.
\end{lem}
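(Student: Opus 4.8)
The plan is to reduce \eqref{myIneq} to a one-variable statement about the function $u\mapsto ue^{-u}$, whose maximum over $[0,\infty)$ is classical. First I would multiply both sides of \eqref{myIneq} by $\epsilon$ (legitimate since $\epsilon>0$) and collect the exponential factors on one side, so that the claim becomes
\begin{equation}
    \epsilon t\, e^{-(1-a)\epsilon t} \leq 1, \qquad \forall\, t\geq 0.
\end{equation}
The key observation is that $1-a = \tfrac{1}{e}$ by the definition of $a$, so the substitution $u\triangleq(1-a)\epsilon t = \tfrac{1}{e}\epsilon t \geq 0$ — a bijection of $[0,\infty)$ onto itself — gives $\epsilon t = e\,u$, and the inequality turns into $e\,u\,e^{-u}\leq 1$, i.e. $u e^{-u}\leq \tfrac{1}{e}$.

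Next I would invoke the elementary fact that $g(u)\triangleq u e^{-u}$ satisfies $g(u)\leq \tfrac{1}{e}$ for all $u\geq 0$, with equality only at $u=1$. This follows from $g'(u) = (1-u)e^{-u}$, which is positive on $[0,1)$, zero at $u=1$, and negative on $(1,\infty)$, together with $g(0)=0$, $g(1)=e^{-1}$, and $g(u)\to 0$ as $u\to\infty$. Undoing the substitution proves \eqref{myIneq}. For the tightness claim I would simply trace the equality case back through the change of variables: equality holds iff $u=1$, i.e. iff $(1-a)\epsilon t = 1$, i.e. iff $t = \tfrac{1}{\epsilon(1-a)}$, which is precisely the stated condition.

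I do not anticipate a genuine obstacle here — the lemma is a calculus exercise. The only points requiring a little care are (i) that all the algebraic rearrangements are reversible, which holds because $\epsilon>0$ and the exponential is strictly positive, and (ii) that one must actually use the numerical value $1-a=\tfrac{1}{e}$ (and not merely $a\in(0,1)$), since it is exactly this value that matches the maximum $\max_{u\geq 0} u e^{-u} = \tfrac{1}{e}$ and makes the bound hold with the constant $\tfrac{1}{\epsilon}$.
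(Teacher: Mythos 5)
Your proof is correct, and it follows essentially the same route as the paper: substitute to reduce the claim to the one-variable extremal fact that $u e^{-u}\leq \tfrac{1}{e}$ on $[0,\infty)$ with equality at $u=1$ (the paper phrases this equivalently as $\min_{x>0} e^x/x = e$ via convexity), then trace the equality case back through the change of variables. No gaps; the observation that the specific value $1-a=\tfrac{1}{e}$ is what makes the constant work is exactly the point of the lemma.
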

\begin{proof}
    Note that the inequality trivially holds when $t=0$. Hence, let $t>0$. The inequality in question holds if and only if the inequality
    \[
        \mu e^{-\mu} \leq e^{-a\mu}
    \]
    holds with $\mu\triangleq \epsilon t>0$. However, this inequality is equivalent to
    \begin{eqnarray}\label{intIneq}
        1 & \leq & \frac{1}{\mu}\; e^{(1-a)\mu} \nonumber\\
          & = & \frac{(1-a)}{(1-a)\mu}\; e^{(1-a)\mu} \nonumber\\
          & = & (1-a) \frac{e^x}{x}
    \end{eqnarray}
    where $x\triangleq (1-a)\mu$. The function $f(x) = \frac{e^x}{x}$ on the domain $(0,\infty)$ is convex. Therefore, when $f'(x)=0$ then $f(x)$ is minimized. This occurs for $x=1$. Thus~\eqref{intIneq} holds when $(1-a)\geq\tfrac{1}{e}$, which is assumed in the statement of the lemma. Therefore inequality~\eqref{myIneq} is true. Equality holds when $t=\tfrac{1}{\epsilon(1-a)}$.
\end{proof}

\begin{proof}[Proof of Theorem~\ref{thm:thmVI1}.]
Recall that for any matrix $\M{A}\in\C^{d\times d}$ the trace norm $\|\M{A}\|_1$ and Hilbert-Schmidt norm $\|\M{A}\|$ satisfy the inequality $\|\M{A}\|_1 \leq \sqrt{d} \|\M{A}\|$.  By assumption of the theorem, the error dynamics of the observer are exponentially convergent.  Therefore $\epsilon(t) \triangleq \|\dens(t) - \hhdens(t)\|_1 \leq \sqrt{d}M e^{-\sigma t}$. Accordingly, for $T=\tfrac{\ln(eM\sqrt{d})}{\sigma}$ the inequality $\sqrt{d}M e^{-\sigma t} \leq 1/e$ holds true for all $t\geq T$. Past that point in time, Fannes' inequality holds:
\[
    |S(\dens) - S(\hhdens)| \leq \epsilon(t)\log(d) - \epsilon(t) \log(\epsilon(t)).
\]
Note that the function $g(x)\triangleq -x\log x$ is increasing on $(0,\tfrac{1}{e})$. Thus the convergence hypothesis pairs with Fannes' inequality to give
\begin{eqnarray*}
    |S(\dens) - S(\hhdens)| & \leq & \log(d) \sqrt{d}M e^{-\sigma t} \\
    && \> - \sqrt{d}Me^{-\sigma t} \log(\sqrt{d}Me^{-\sigma t})\\
    & = & \log(d) \sqrt{d}M e^{-\sigma t}\\
    && \> - \sqrt{d}M\left( \log(\sqrt{d}M) - \sigma t \right)e^{-\sigma t}.
\end{eqnarray*}
Using Lemma~\ref{lem:myLemma},
\begin{eqnarray*}
    |S(\dens) - S(\hhdens)| & \leq &  \log(d) \sqrt{d}M e^{-\sigma t} \\
    && \> - \sqrt{d}M\log(\sqrt{d}M) e^{-\sigma t} + \sqrt{d}M e^{-a \sigma t}.
\end{eqnarray*}
This proves the theorem.
\end{proof}

Von Neumann also defined the \emph{relative entropy} of two quantum densities:
\begin{align}
0 \leq S(\dens \| \altdens) \triangleq \text{tr}(\dens \log \dens - \dens \log \altdens).
\end{align}
This relative entropy is also the \emph{distinguishability} between $\dens$ and $\altdens$: when $S(\dens || \altdens)=0$, the two densities are identical~\cite[ch. 11.3]{NieChu:00}. Moreover, the relative entropy is unbounded. That is $S(\dens\|\altdens) \rightarrow +\infty$ is allowed. This happens, for instance, when the kernel of $\altdens$ is not contained within that of $\dens$. Thus, in general, we should not hope for convergence of the relative entropy. This is elucidated in the following.

\begin{rem}
    Consider the stationary density $\dens \triangleq \ketbra{0}{0}$ alongside the time-dependent density
    \[
        \altdens(t) \triangleq
        \begin{bmatrix}
            1-e^{-t} & 0\\
            0 & e^{-t}
        \end{bmatrix}.
    \]
    Note that $\altdens(t) \rightarrow \dens$ exponentially as $t\rightarrow\infty$. However, at any finite time $t\in(0,\infty)$ the relative entropy $S(\altdens\|\dens) \triangleq +\infty$. This example shows that convergence of the relative entropy is a strictly stronger condition than convergence in the Hilbert-Schmidt norm.
\end{rem}

\begin{figure*}[t!]
    \centering
    \subfloat[]{\hspace{-6mm}\centering\includegraphics[width=0.475\textwidth]{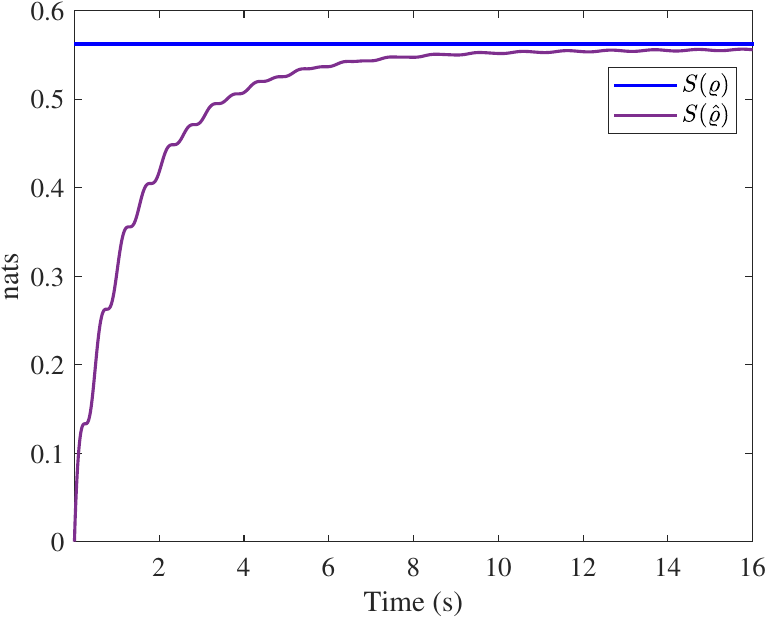}}
    \hfill
    \subfloat[]{\hspace{-6mm}\centering\includegraphics[width=0.475\textwidth]{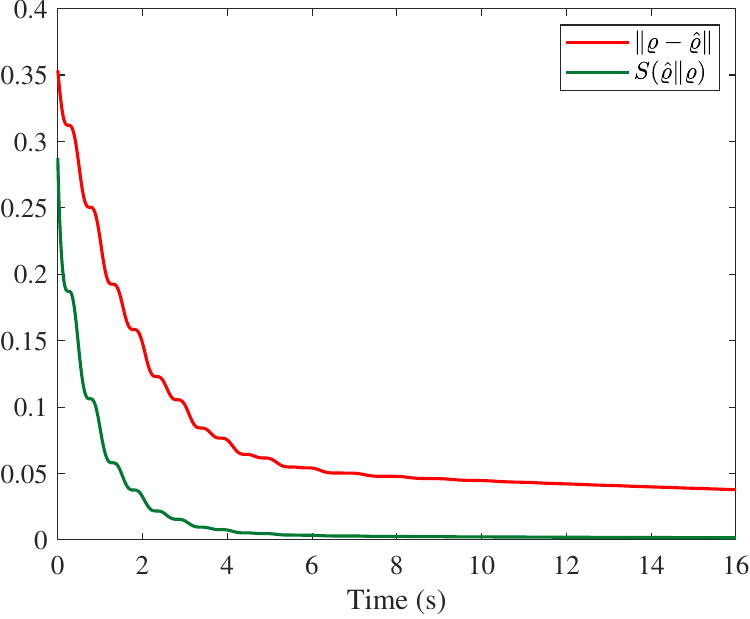}}
    \caption{(a) shows the entropy $S(\hhdens)$ estimated using a quantum state observer converging to that of the system $S(\dens)$ as the observer's error converges to zero. (b) shows the relative entropy $S(\hhdens\|\dens)$ converging to zero as $\|\dens-\hhdens\|$ converges to zero.}
    \label{fig:1}
\end{figure*}

Despite this example, under appropriate assumptions, we can prove that the relative entropy is convergent. The second main result of this section is presented below.

\begin{thm} \label{thm:thmv14}
    Suppose that $\dens(0)$ is positive definite. Under the same hypothesis of Theorem \ref{thm:thmVI1}, the relative entropy $|S(\hhdens\| \dens)|$ is essentially exponentially convergent.
\end{thm}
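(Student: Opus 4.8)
The plan is to leverage the unitary structure of the system flow~\eqref{lindblad}: since $\dens(t)=e^{-\i\Ham t}\dens(0)e^{\i\Ham t}$ with $\Ham$ Hermitian, the spectrum of $\dens(t)$ is frozen at that of $\dens(0)$. Writing $\mu\triangleq\lambda_{\min}(\dens(0))>0$, every $\dens(t)$ is therefore positive definite with all eigenvalues in $[\mu,1]$. Two consequences follow at once: first, $\ker\dens(t)=\{\V{0}\}\subseteq\ker\hhdens(t)$, so the relative entropy $S(\hhdens(t)\|\dens(t))$ is finite for every $t\geq0$; second, $\log\dens(t)$ is a genuine bounded operator with $\|\log\dens(t)\|_{\mathrm{op}}=\max_k|\log\lambda_k(\dens(t))|\leq-\log\mu$, a bound uniform in $t$. (As a byproduct, $S(\dens(t))\equiv S(\dens(0))$ is constant, again by spectral invariance.)

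Next I would decompose the relative entropy so as to isolate the quantity Theorem~\ref{thm:thmVI1} already controls. With the convention $0\log0=0$ one has $\text{tr}(\hhdens\log\hhdens)=-S(\hhdens)$, while linearity of the trace gives $\text{tr}(\hhdens\log\dens)=\text{tr}(\dens\log\dens)+\text{tr}\big((\hhdens-\dens)\log\dens\big)=-S(\dens)+\text{tr}\big((\hhdens-\dens)\log\dens\big)$; hence
\[
    S(\hhdens\|\dens)=\big(S(\dens)-S(\hhdens)\big)-\text{tr}\big((\hhdens-\dens)\log\dens\big).
\]
Since $S(\hhdens\|\dens)\geq0$ we have $|S(\hhdens\|\dens)|=S(\hhdens\|\dens)$, and by the triangle inequality it suffices to bound $|S(\dens)-S(\hhdens)|$ and $\big|\text{tr}\big((\hhdens-\dens)\log\dens\big)\big|$ separately. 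The first is essentially exponentially convergent directly by Theorem~\ref{thm:thmVI1}. For the second, Hölder's inequality for Schatten norms together with $\|\cdot\|_1\leq\sqrt{d}\,\|\cdot\|$, the observer bound $\|\hhdens-\dens\|\leq Me^{-\sigma t}$, and the uniform operator-norm estimate from the first paragraph give
\[
    \big|\text{tr}\big((\hhdens-\dens)\log\dens\big)\big|\;\leq\;\|\hhdens-\dens\|_1\,\|\log\dens\|_{\mathrm{op}}\;\leq\;(-\log\mu)\,\sqrt{d}\,M\,e^{-\sigma t}.
\]
Adding the two estimates (taking $T$ as in the proof of Theorem~\ref{thm:thmVI1}) bounds $|S(\hhdens\|\dens)|$ for $t\geq T$ by a finite sum of decaying exponentials, each with a positive rate, hence by $M'e^{-\sigma' t}$ for suitable $M',\sigma'>0$; boundedness on $[0,T]$ is immediate from continuity of $t\mapsto S(\hhdens(t)\|\dens(t))$, since $\proj_\S$ is $1$-Lipschitz and $\dens(t)$ stays in the compact set of densities with smallest eigenvalue at least $\mu$. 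This establishes essential exponential convergence.

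The only delicate point is guaranteeing that the relative entropy is finite and remains well-behaved in the first place — precisely what the hypothesis $\dens(0)\succ\M{0}$, combined with unitarity of the flow, buys: it simultaneously supplies the kernel containment needed for finiteness and the time-uniform bound on $\|\log\dens(t)\|_{\mathrm{op}}$ needed to tame the cross term. Absent a uniform lower bound on $\lambda_{\min}(\dens(t))$, the factor $\|\log\dens(t)\|_{\mathrm{op}}$ could grow without bound and defeat the exponential decay of $\|\hhdens-\dens\|_1$ — exactly the pathology of the Remark's $\altdens(t)$ — so the closed (Hamiltonian) structure of the system is essential to the argument.
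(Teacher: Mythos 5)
Your proposal is correct and follows essentially the same route as the paper: the identical decomposition of $S(\hhdens\|\dens)$ into the entropy difference $|S(\dens)-S(\hhdens)|$ (handled by Theorem~\ref{thm:thmVI1}) plus the cross term $\mathrm{tr}\big((\hhdens-\dens)\log\dens\big)$, with the cross term tamed by unitary invariance of the flow together with positive definiteness of $\dens(0)$. The only cosmetic difference is that you bound the cross term via the H\"older pairing $\|\hhdens-\dens\|_1\,\|\log\dens\|_{\mathrm{op}}$ whereas the paper uses Cauchy--Schwarz in the Hilbert--Schmidt inner product; both yield the same exponential decay.
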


\begin{proof}
    Given $|S(\hhdens||\dens)| = |\tr{ \hhdens \log \hhdens} - \tr{\hhdens \log \dens}|$,
    \begin{align*}
        |S(\hhdens\| \dens)| &= |\tr{ \hhdens \log \hhdens} + S(\dens) - S(\dens) - \tr{ \hhdens \log \dens}| \\
        &\leq |S(\dens) - S(\hhdens)| + |\underbrace{\tr{(\dens- \hhdens) \log \dens}}_{(\dens-\hhdens,\log \dens)_{\text{tr}}}|.
    \end{align*}
    Using the Cauchy–Schwarz inequality,
    \begin{align*}
        |S(\hhdens||\dens)| &\leq |S(\dens) - S(\hhdens)|+ \|\dens - \hhdens\| \cdot \|\log \dens\|.
    \end{align*}
    Note that $\dens(t) = \M{U}(t) \dens(0) \M{U}(t)^\dagger$ where $\M{U}(t) \triangleq e^{-\i \Ham t}$ is unitary. Applying the definition of the matrix logarithm, 
    \[
        \log \dens(t) = \M{U}(t) \, \log( \dens(0)) \, \M{U}(t)^\dagger.
    \]
    Combining this equation with the unitary invariance of the Hilbert-Schmidt norm on $\H$, we see that $\|\log \dens(t)\| = \|\log \dens(0)\|$ for all $t\geq0$. Since $\dens(0)$ is assumed to be positive definite  $D\triangleq \|\log \dens(0)\| < \infty$. Therefore
    \begin{align*}
        S(\hhdens\| \dens) \leq |S(\dens) - S(\hhdens)| + D \|\hhdens-\dens\|.
    \end{align*}
    Theorem \ref{thm:thmVI1} proved that $|S(\dens) - S(\hhdens)|$ is essentially exponentially convergent.  Hence $S(\hhdens\| \dens)$ is the sum of an essentially exponentially convergent term and an exponentially convergent term.  Therefore $S(\hhdens\| \dens)$ is itself essentially exponentially convergent.
\end{proof}

Theorem \ref{thm:thmv14} shows that the quantum state observer developed in Section \ref{sec:LQSO} will also produce essentially exponential convergence of the relative entropy to zero as long as $\dens(0)$ is full-rank. One would imagine the rank condition is satisfied quite often in nature; however, it is an interesting open research problem to study if this assumption can be relaxed.

\subsection{Example: A laser-driven atom}
A numerical example will now be presented to illustrate the results developed in this note. Consider a laser-driven atom governed by the Liouville-von Neumann equation with Hamiltonian
\[
    \Ham = 
    \begin{bmatrix}
        E_0 & \omega\\
        \bar{\omega} & E_1
    \end{bmatrix}
\]
where $E_0,E_1\in\R$ are its energy eigenvalues and $\omega\in\C$ is the (constant) driving frequency. Assume $E_0=-0.5$, $E_1=0.5$, and $\omega=3$. The output $\V{y}(t)$ will consist of the expected value of the projective measurements onto the eigenstates of the undriven Hamiltonian: $\ketbra{0}{0}$ and $\ketbra{1}{1}$. After vectorizing the master equation, one will find that the Kalman observability matrix $\M{O}(\M{A},\M{C})$ has rank 4. Thus the system is linearly observable. The observer gain $\M{K} = \M{C}^\dagger$ will be used, which ensures the exponential stability of the observer error dynamics. The system and observer are initiated in the states
\[
    \dens_0 \triangleq
    \begin{bmatrix}
        0.25 & 0\\
        0 & 0.75
    \end{bmatrix}
    \;\;\; \text{and} \;\;\;
    \hhdens_0 \triangleq
    \begin{bmatrix}
        0 & 0\\
        0 & 1
    \end{bmatrix}.
\]
Figure~\ref{fig:1} plots the results of this numerical experiment. In Figure~\ref{fig:1}a, one can see that the von Neumann entropy of the estimated state $\hhdens$ converges to that of the true state $\dens$ as guaranteed by Theorem~\ref{thm:thmVI1}. In Figure~\ref{fig:1}b, one can see that the relative entropy $S(\hhdens\|\dens)$ converges exponentially to zero as the normed estimation error converges exponentially to zero. This is when the quantum states $\dens$ and $\hhdens$ become indistinguishable. This was guaranteed by Theorem~\ref{thm:thmv14} since $\dens_0$ is full-rank.

\vspace{5mm}
\section{Conclusion}
A valid quantum state observer is one whose state converges to that of the reference quantum system while also ensuring that its estimated state is a valid density operator. This note showed that when a valid quantum state observer is used to infer the state of a closed quantum system: (i) the entropy of the observer's state is always essentially exponentially convergent to that of the system's state, and (ii) the relative entropy between the observer and system's states is essentially exponentially convergent to zero as long as the system starts in a full-rank state.



\bibliography{
    BibFiles/ref.bib,
    BibFiles/controlTextbooks.bib,
    BibFiles/balasBiblio.bib,
    BibFiles/clouatreBiblio.bib,
    BibFiles/quantumTextbooks.bib,
    BibFiles/originalRefs.bib
}
\bibliographystyle{apsrev4-1}


\end{document}